\documentclass[reqno]{amsart}

\usepackage{amsmath,amsfonts,amssymb,amsthm,mathrsfs}
\usepackage[svgnames]{xcolor}
\usepackage[colorlinks,linkcolor=blue,citecolor=blue,urlcolor=Navy,breaklinks]{hyperref}
\numberwithin{equation}{section}
\usepackage{booktabs,tabularx,array,enumitem,xurl,multirow}

\input{source/macros}
\newcommand{\papertitle}{Subgroup Packing for Batched PASTA Transciphering}
\newcommand{\papershorttitle}{Subgroup Packing for PASTA}

\title[\papershorttitle]{\papertitle}
\date{}
\hypersetup{
  pdftitle={\papertitle},
  pdfauthor={Mugurel Barcau; Vicenţiu Paşol; George C. Ţurcaş}
}

\author[M. Barcau \and V. Pa\c sol]{Mugurel Barcau \and Vicen\c tiu Pa\c sol}
\address{Institute of Mathematics of the Romanian Academy \and CertSIGN, Bucharest}
\email{mugurel.barcau@imar.ro; vicentiu.pasol@imar.ro}
\author[G. C. \c Turca\c s]{George C. \c Turca\c s}
\address{Babe\c s-Bolyai University, Cluj-Napoca \and certSIGN, Bucharest}
\email{george.turcas@ubbcluj.ro}
\thanks{This work was supported by certSign Research and Innovation.}

\begin{document}
\begin{abstract}
With transciphering, a server converts symmetrically encrypted records into
homomorphic ciphertexts without learning the records or the symmetric key.
For the PASTA cipher, this conversion involves dense linear maps whose implementation
depends on how record words are arranged in the ciphertext. We ask whether
rearranging a fixed batch can reduce its conversion cost. Instead of storing
each record's words in a contiguous block, our layout interleaves records
so that cyclic word shifts preserve each record's
positions, which form a coset of a cyclic subgroup. For direct evaluation as
a sum of masked translations, we characterize the required displacements and
relate their counts, 255 for the contiguous layout and 128 for the subgroup
layout, to a prior
transversal-difference invariant. This result does not lower-bound arbitrary
composed circuits. We implement three equally batched schedules for complete
PASTA-3 conversion and subsequent public subset-sum queries in HElib. Across
twelve paired corpora under six homomorphic keys, all 24 direct and subgroup
conversions and all 48 subsequent queries return the expected values and
required zeroes. The median paired ratio of direct to subgroup server cost is
1.60, including fresh public generation, conversion and two queries. The
reduction comes with less remaining noise capacity. These results establish
a packing-dependent cost--noise tradeoff.
\end{abstract}

\maketitle
\section{Introduction}
\label{sec:introduction}

Consider a producer that collects confidential records, each containing 128
measurements. The records belong to one owner, who wants to store them on a
server and choose computations after collection. An authorized analyst might
later select some measurements and request their sum for each record. The
server should learn neither the measurements nor the answers; the owner or
its analyst holds the decryption key. Choosing the subset after collection
matters: if the subset were fixed in advance, the producer could compute the
sum itself before encrypting.

Homomorphic encryption (HE) lets the server compute on encrypted records.
The producer can encrypt every batch directly under HE, or use
\emph{transciphering}, which assigns more work to the server. In the latter
approach, the producer uploads symmetrically encrypted records with public
nonce and counter identifiers. Once per period of symmetric-key reuse, it
also supplies an HE encryption of that key. The server evaluates symmetric
decryption under HE, obtaining encrypted records without learning their
contents or the symmetric key. The HE secret key stays with the owner or
analyst. PASTA, a stream cipher over a prime field, was designed for this
hybrid setting~\cite{PASTA}. Compact
symmetric ciphertexts can reduce communication after key provisioning is
amortized, but conversion costs server time; a producer-compute advantage
must also be measured rather than assumed.

Our question is whether the same number of PASTA records can be converted more
cheaply by rearranging their words in a ciphertext. We study original PASTA-3
over $\F_{65537}$ using HElib and the Brakerski--Gentry--Vaikuntanathan (BGV)
HE scheme~\cite{BrakerskiGentryVaikuntanathan2012}. The cipher's two 128-word
state branches undergo four public affine layers, with branch mixing and nonlinear
steps between them. A dense affine layer combines many encrypted words.
Packed HE places values in \emph{slots} on which arithmetic acts in parallel,
using single instruction, multiple data (SIMD) operations
\cite{SmartVercauteren2014SIMD,HaleviShoupAlgorithmsInHElib}. Combining these
values also requires rotations that move them between slots. A \emph{schedule}
is the chosen sequence of such homomorphic operations. Baby-step/giant-step
(BSGS) schedules share rotations across matrix terms
\cite{HaleviShoup2018LinearTransforms}; their costs depend on the placement
of the words.

\subsection{Four records, four words}
\label{sec:four-by-four}

Take a cyclic row of sixteen slots, showing one state branch for four records
$a,b,c,d$. Subscripts identify words. Two arrangements of the same values are
\[
\begin{array}{@{}l@{\quad}llll@{}}
\text{contiguous:}&a_0\ a_1\ a_2\ a_3&b_0\ b_1\ b_2\ b_3&c_0\ c_1\ c_2\ c_3&d_0\ d_1\ d_2\ d_3\\[1mm]
\text{interleaved:}&a_0\ b_0\ c_0\ d_0&a_1\ b_1\ c_1\ d_1&a_2\ b_2\ c_2\ d_2&a_3\ b_3\ c_3\ d_3.
\end{array}
\]
Suppose each destination needs the preceding word of its own record,
cyclically: word zero should receive word three. A right rotation by four
slots in the interleaved arrangement gives
\[
 a_3\ b_3\ c_3\ d_3\mid a_0\ b_0\ c_0\ d_0\mid
 a_1\ b_1\ c_1\ d_1\mid a_2\ b_2\ c_2\ d_2.
\]
Every value stays in its record's positions, including at the row boundary.
In the contiguous arrangement, a right rotation by one gives
\[
 d_3\ a_0\ a_1\ a_2\mid a_3\ b_0\ b_1\ b_2\mid
 b_3\ c_0\ c_1\ c_2\mid c_3\ d_0\ d_1\ d_2.
\]
The first slot of each block now contains another record's word. A left
rotation by three supplies the correct words at those four positions. If
$R$ and $L$ are the two rotated vectors, and the public mask $M$ is one at
block starts and zero elsewhere, then $R+M\odot(L-R)$ is the desired local
rotation. Here $\odot$ means multiplication slot by slot. This repair uses
two row rotations and one public mask product; BSGS can share repaired
rotations across many matrix terms.

Interleaving puts record $b$, word $w$ at $b+4w$. Its positions form the coset
$b+\{0,4,8,12\}$ of the order-four subgroup of the cyclic row. Adding four
changes the word while preserving the record. We call a packing
\emph{subgroup aligned} when word motion follows such a subgroup: each
record occupies one coset, which need not itself be a subgroup. At full size,
we compare $128b+w$ with $b+128w$ in a row of 16,384 slots, for 128 records
of 128 words. Both layouts batch the same number of states, one PASTA counter
per record; we call such layouts \emph{equally batched}. Every counter
starts from the same 256-word PASTA key, repeated in its positions; each has
its own message, public matrices and constants. Coefficient masks preserve
these counter-specific differences while sharing the rotation schedule.

\subsection{What the algebra explains, and what we compare}

We compare direct contiguous BSGS, contiguous BSGS using the local rotations
above, and subgroup-aligned cyclic BSGS. All implement the same complete
PASTA-3 computation, with a required output of message words and zeroes in
unused branch slots. Subsequent queries accept either native placement, so
this workload needs no homomorphic conversion between layouts.

For a direct sum of masked translations---one masked translation per
displacement, evaluated without factoring---its \emph{support} is the set of
distinct displacements, or labels, with nonzero coefficient masks. The
transversal-difference invariant of Barcau, Pa\c{s}ol and
\c{T}urca\c{s}~\cite{BarcauPasolTurcas2026TDN} limits how much choosing
different coset representatives can reduce that support. We connect the
invariant to counter-dependent matrix evaluation: dense contiguous word
maps require 255 direct labels, while exchanging the word and counter axes
gives 128 subgroup labels. This exchange chooses no better coset representatives
for the original logical word coordinate, and it leaves the invariant unchanged. The
result concerns direct sums; it does not lower-bound factored circuits or runtime.

Interleaving, BSGS and layout-aware HE compilation are established
\cite{HaleviShoup2018LinearTransforms,CHET,Porcupine}. Our contribution is the
exact direct-matrix interpretation, explicit equally batched PASTA programs
with transported counter-specific coefficients, and a comparison through
complete conversion and subsequent queries. Together these connect the algebraic
criterion to mask preparation and noise costs that support counts do not predict.

Across twelve paired corpora under six HE keys, the selected direct and
subgroup programs recover every message and required zero slot and answer
two later public subset-sum queries per conversion. The median paired
direct/subgroup server-cost ratio is 1.60, including fresh public generation,
conversion and both queries. The subgroup layout retains less noise margin, and the
evidence does not establish faster subsequent queries. The separate shallow
\emph{direct-ingestion} baseline encrypts records without PASTA; it fails
during its first query and leaves the complete ingestion comparison open.
That failure is distinct from the successful contiguous-direct
\emph{transciphering} comparator used in the paired result.

\section{PASTA-3 and Packed Evaluation}
\label{sec:background}

\subsection{The complete computation}

PASTA is a stream cipher over a prime field designed for hybrid homomorphic
encryption~\cite{PASTA}. We use original PASTA-3 over $\F_{65537}$ and the
public-material generator and known-answer vectors from the authors'
artifact~\cite{PASTAArtifact,PASTAFramework}. The state starts from one
256-word key, split into $x_L,x_R\in\F_{65537}^{128}$. Every counter starts
from this same key. A public nonce and counter initialize the generator; for
each of four layers it draws the left matrix, the right matrix, and then the
constants, in the original order. The nonce/counter pair must not repeat under
the same symmetric key. The identifiers are public, as are the resulting
matrices and constants; the state on which they act remains encrypted.

Write the affine operation at layer $j$ as
\[
 a_L=M_L^{(j)}x_L+c_L^{(j)},\qquad
 a_R=M_R^{(j)}x_R+c_R^{(j)}.
\]
The branch Mix then produces
\[
 (x_L,x_R)=(2a_L+a_R,\;a_L+2a_R).
\]
After each of the first two affine/Mix layers, each branch undergoes
\[
 F(x)_0=x_0,\qquad F(x)_i=x_i+x_{i-1}^{2}\quad(1\le i<128).
\]
The right-hand side uses the pre-transformation state, so updates do not cascade
from one word to the next. After the third affine/Mix layer, every word is
cubed. A fourth affine/Mix layer follows. Its left branch is the 128-word
keystream $z_b$ for counter $b$. For message $v_b$, the public symmetric
ciphertext is $c_b=v_b+z_b$, and transciphering computes $c_b-z_b$ under HE.
This specifies four affine/Mix layers, two predecessor-square transformations,
and one componentwise cube in the original order.

The evaluator receives public material, the encrypted repeated key, and
HE evaluation keys. The clear symmetric key, HE secret key, and expected
outputs are used only for fixture generation or checking. The evaluator
carries the ciphertext through the entire graph without intermediate
decryption or oracle replacement. Ciphertext products require
relinearization material as well as the affine schedules' automorphism keys.

For the full batch there are 16,384 message words. Let $\iota(b,w)$ be the row
coordinate for the chosen layout. Physical slot $2\iota(b,w)+\epsilon$ holds
branch $\epsilon$, with zero denoting left and one denoting right. The returned ciphertext must satisfy
\[
 \mathrm{out}_{2\iota(b,w)}=v_{b,w},\qquad
 \mathrm{out}_{2\iota(b,w)+1}=0
 \quad(0\le b,w<128).
\]
There are 32,768 slots in all. Selecting the left branch with a public mask,
negating it, and adding the public symmetric ciphertext encoded only in that
branch enforces the formula. The zeroes are part of the homomorphic output;
the mask and final subtraction count as evaluation work.

\subsection{Data ownership and the server's view}

The producer and analyst act for one owner, who keeps the HE secret key.
The server receives public/evaluation keys, the encrypted symmetric key,
PASTA ciphertexts, metadata, and public queries. Batch size and layout are
also public. We assume the server follows the circuit while attempting to
learn the data; HE alone does not authenticate malicious computation.
Clearing non-output slots specifies the returned values, not query access
control: the owner can decrypt any ciphertext it obtains. We add neither
multi-owner keys nor a verification protocol.

\subsection{Slots, rotations, and matrix schedules}

SIMD packing represents many field values in one plaintext; homomorphic
arithmetic acts on the corresponding slots in parallel
~\cite{SmartVercauteren2014SIMD,HaleviShoupAlgorithmsInHElib}. A ring
automorphism induces a structured slot permutation. We model the relevant
coordinates by a finite abelian group $G$, with positive translation
\[
 (\tau_g v)_z=v_{z-g}.
\]
A value at source $z-g$ therefore arrives at destination $z$. On a cyclic row
this is a right rotation. Branch swap is a separate permutation.

A direct diagonal implementation of a linear map is
\[
 L=\sum_{g\in S}D_g\tau_g,
\]
where $D_g$ multiplies by a public vector of coefficients. The set $S$ records
which translations have nonzero masks. BSGS factors these labels into baby and
giant steps so that rotations and partial sums can be reused
~\cite{HaleviShoup2018LinearTransforms}. Consequently the number of diagonals
and the number of scheduled rotations are different quantities. An individual
rotation request may in turn use one or more backend key switches, depending
on the available evaluation keys. We keep support sizes, program operations,
key material, remaining noise capacity, and runtime distinct.

\subsection{The fixed quotient geometry}
\label{sec:quotient-geometry}

For contiguous placement, the following notation separates the record/counter
coordinate $H$ from the word coordinate $Q$.
Let $H\le G$, let $Q=G/H$, and let $\pi:G\to Q$ be the projection. A section
$s:Q\to G$ chooses representatives satisfying $\pi\circ s=\mathrm{id}_Q$;
its image $T=s(Q)$ is a transversal, with difference set
$T-T=\{t-t':t,t'\in T\}$. Each slot has a unique coordinate
$h+s(y)$ with $h\in H$ and $y\in Q$. For a layout whose word coordinate follows $s(Q)$,
a within-state movement from word $x$ to word $y$ has label $s(y)-s(x)$.
For contiguous PASTA packing, $G=C_{16384}$, $H=\langle128\rangle$,
$h=128b$ identifies counter $b$, and $s(w)=w$ selects word $w$. Thus $h$
indexes packed counters and the section indexes words within each counter.
The prior invariant is
\[
 \delta(G,H)=\min_{T\text{ a transversal of }G/H}|T-T|.
\]
Translation of the whole transversal leaves its difference set unchanged, so
we may normalize $s(0)=0$. The invariant belongs to the fixed pair $(G,H)$;
$|T-T|$ belongs to a particular choice of representatives. The cyclic formula
and its general mathematical development are prior work
~\cite[Theorem~3.5 and Corollary~3.6]{BarcauPasolTurcas2026TDN}. The next
section establishes the precise direct-matrix interpretation used here.

\section{Exact Support of Direct Matrix Evaluation}
\label{sec:direct}

We now state the paper's exact interface between a logical packed map and its
direct homomorphic schedule.  The theorem permits a different word matrix on
each $H$-fibre (the slots $h+s(Q)$ of one packed counter $h$), as occurs
when PASTA public material varies across packed
counters.

Let $\F$ be the plaintext field and $V\ne0$ an $\F$-vector space.
PASTA uses $V=\F$; the vector-space statement generalizes this scalar-word
interface. Identify packed coordinates with
\[
  G=\{h+s(y):h\in H,\ y\in Q\}.
\]
For each $h\in H$, let
$A^{(h)}=(A^{(h)}_{y,x})_{y,x\in Q}\in\F^{Q\times Q}$ be a word matrix.
Here $h$ identifies a packed counter in the contiguous instantiation,
permitting counter-specific matrices.
The fibrewise linear map $L_A:V^G\to V^G$ is
\[
  (L_Av)_{h+s(y)}=\sum_{x\in Q}A^{(h)}_{y,x}v_{h+s(x)}.
\]
Define its \emph{section difference support}
\[
  S_A(s)=\{s(y)-s(x):
        A^{(h)}_{y,x}\ne0\text{ for some }h\in H\}.
\]
Each nonzero matrix entry needs a movement from its source word to its
destination word. Entries with the same coordinate difference share one
coefficient mask; the theorem explains why a direct sum must retain every
required difference label.

\begin{theorem}
\label{thm:direct-support}
With $\tau_g$ oriented by $(\tau_gv)_z=v_{z-g}$, define, for
$g\in S_A(s)$, the scalar mask $d_g:G\to\F$ by
\[
 d_g(h+s(y))=
 \begin{cases}
 A^{(h)}_{y,x},&\text{if }g=s(y)-s(x)\text{ for some }x\in Q,\\
 0,&\text{otherwise,}
 \end{cases}
\]
and let $D_g:V^G\to V^G$ be the diagonal map
$(D_gv)_z=d_g(z)v_z$.  The section property makes the displayed $x$ unique.
Then $L_A$ has the exact aggregated normal form
\[
  L_A=\sum_{g\in S_A(s)}D_g\tau_g.
\]
After summands with the same $g$ have been aggregated, every
$g\in S_A(s)$ is necessary among representations of the form
$\sum_gD_g\tau_g$.
\end{theorem}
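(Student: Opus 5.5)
The plan is to verify the normal form pointwise on each destination slot, and then to get necessity from the fact that the translations $\tau_g$ are independent over diagonal maps.

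\textbf{Uniqueness of $x$.} Fix $g$ and a destination $z=h+s(y)$. If $s(y)-s(x)=s(y)-s(x')$, then $s(x)=s(x')$. Applying $\pi$ and using $\pi\circ s=\mathrm{id}_Q$ gives $x=x'$. So $d_g$ is well defined. Coordinates $h+s(y)$ are unique because $G=H\oplus$-decomposes setwise via the transversal: each element is uniquely $h+s(y)$.

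\textbf{Normal form.} Fix $z=h+s(y)$ and compute
\[
 \Bigl(\sum_{g\in S_A(s)}D_g\tau_g v\Bigr)_z=\sum_{g}d_g(z)\,v_{z-g}.
\]
For a summand with $d_g(z)\neq0$ there is a unique $x$ with $g=s(y)-s(x)$. Then
\[
 z-g=h+s(x),
\]
so the summand equals $A^{(h)}_{y,x}v_{h+s(x)}$. The map $x\mapsto s(y)-s(x)$ is injective from $Q$ into $G$, by the uniqueness step. Hence the summands with $d_g(z)\ne0$ correspond bijectively to the $x$ with $A^{(h)}_{y,x}\ne0$. Every such difference lies in $S_A(s)$ by definition, so no nonzero entry is lost. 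The remaining $g\in S_A(s)$ contribute $0$ at $z$. The sum is therefore exactly $\sum_x A^{(h)}_{y,x}v_{h+s(x)}=(L_Av)_z$.

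\textbf{Necessity.} Suppose $L_A=\sum_{g\in S'}E_g\tau_g$ with diagonal masks $e_g$, aggregated so that the $g$ are distinct. I will show the coefficients are forced: $e_g=d_g$ for all $g$, with $d_g:=0$ outside $S_A(s)$.
\begin{itemize}
\item Fix a nonzero $u\in V$, which exists since $V\ne0$.
\item Fix a source $z_0\in G$, and test on the vector $u\delta_{z_0}$ that is $u$ at $z_0$ and $0$ elsewhere.
\item The coordinate at destination $z$ is $e_{z-z_0}(z)\,u$, because distinct $g$ give distinct sources $z-g$, so at most one summand hits $z_0$.
\item So the matrix entry of the operator from $z_0$ to $z$ equals $e_{z-z_0}(z)$. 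This determines every $e_g(z)$ uniquely, by taking $z_0=z-g$.
\end{itemize}
Comparing with the normal form gives $e_g=d_g$. For $g\in S_A(s)$ there is some $h,y,x$ with $A^{(h)}_{y,x}\ne0$ and $g=s(y)-s(x)$, so $d_g(h+s(y))\ne0$. Hence $g\in S'$ with a nonzero mask, and every label in $S_A(s)$ is necessary.

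The main subtlety is only the aggregation hypothesis. Without it, cancelling summands with equal $g$ would break uniqueness. So I state that $e_g$ denotes the aggregated mask, and the independence argument applies to it.
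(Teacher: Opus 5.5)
Your proposal is correct and takes essentially the same route as the paper: a pointwise check at each destination $h+s(y)$ gives the normal form, and a single-source test input supported at $h+s(x)$ gives necessity. The one difference is that you get slightly more from the test input. Reading off the operator entry from $z_0$ to $z$ shows that the aggregated masks themselves are forced ($e_g=d_g$), whereas the paper only argues that a representation omitting $g$ returns $0$ at $h+s(y)$ while $L_A$ does not.
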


\begin{proof}
At destination $h+s(y)$, the $g$-summand reads
\[
 (\tau_gv)_{h+s(y)}=v_{h+s(y)-g}.
\]
For $g=s(y)-s(x)$ this is exactly $v_{h+s(x)}$, and the stated mask supplies
coefficient $A^{(h)}_{y,x}$.  Summing over labels therefore gives $L_A$.

For necessity, fix a nonzero entry $A^{(h)}_{y,x}$, put
$g=s(y)-s(x)$, and choose $0\ne v_0\in V$. Apply $L_A$ and any
representation $\sum_{g'}\widetilde D_{g'}\tau_{g'}$ that omits the label $g$
to the input supported only at $h+s(x)$, with value $v_0$ there. At destination
$h+s(y)$, every label $g'\ne g$ reads a zero source coordinate, so that
representation returns $0$, whereas $L_A$ returns $A^{(h)}_{y,x}v_0\ne0$.
Hence label $g$ cannot be omitted.
\end{proof}

\begin{corollary}
\label{cor:dense}
If, for every $(y,x)\in Q^2$, at least one fibre has
$A^{(h)}_{y,x}\ne0$, then $S_A(s)=T-T$.  Consequently the minimum number of
direct aggregated summands over all sections is $\delta(G,H)$, and the
minimum number of nonidentity summands is $\delta(G,H)-1$.
\end{corollary}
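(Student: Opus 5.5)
The argument has three parts: compute $S_A(s)$ under the density hypothesis, minimize over sections using the definition of $\delta(G,H)$, and discard the identity label.

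\emph{Step 1: $S_A(s)=T-T$.} By the definition of the section difference support,
\[
S_A(s)=\{s(y)-s(x):(y,x)\in Q^2,\ \exists h,\ A^{(h)}_{y,x}\ne0\}.
\]
Under the hypothesis, every pair $(y,x)$ qualifies, so $S_A(s)=\{s(y)-s(x):y,x\in Q\}$. Because $T=s(Q)$, this set is exactly $T-T$.

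\emph{Step 2: minimization.} Theorem~\ref{thm:direct-support} says that for a fixed section $s$, the aggregated normal form has exactly $|S_A(s)|=|T-T|$ summands, and no representation of the form $\sum_g D_g\tau_g$ can omit any of these labels. So the minimum number of aggregated summands for that section is $|T-T|$. Two facts then give the minimum over sections.
\begin{itemize}
\item Sections $s$ correspond bijectively to transversals $T=s(Q)$ of $G/H$, since $\pi$ identifies each element of $T$ with its coset.
\item The density hypothesis is a property of the matrices $A^{(h)}_{y,x}$ indexed by $(y,x)\in Q^2$, and does not refer to $s$. It therefore holds uniformly for every section.
\end{itemize}
Hence the minimum over sections is $\min_T|T-T|=\delta(G,H)$, directly from the definition of the invariant.

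\emph{Step 3: nonidentity summands.} The identity label $0=s(y)-s(y)$ always lies in $T-T$, because $Q$ is nonempty. Under the hypothesis it is genuinely required, since some fibre has $A^{(h)}_{y,y}\ne0$. The count of nonidentity labels is therefore $|T-T|-1$ for every section. Its minimum is $\delta(G,H)-1$, attained by the same optimal transversal.

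\emph{Main subtlety.} The statement concerns representations relative to the chosen section's coordinatization, meaning the fibrewise map $L_A$ defined through $s$. Changing $s$ changes the physical placement of words, so each section yields its own logical-to-physical map. I would state explicitly that "over all sections" means minimizing, over layouts, the necessary support of Theorem~\ref{thm:direct-support}. The necessity half of that theorem is what makes $|T-T|$ a lower bound for each section, not merely an achievable count.
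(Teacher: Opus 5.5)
Your proposal is correct and follows the same route as the paper. The paper states the corollary without a separate proof, as an immediate consequence of Theorem~\ref{thm:direct-support} and the definition of $\delta(G,H)$: density gives $S_A(s)=s(Q)-s(Q)=T-T$, the normal form together with the necessity clause makes $|T-T|$ the exact count for each section, minimizing over sections (equivalently, transversals) yields $\delta(G,H)$, and $0\in T-T$ accounts for the nonidentity count $\delta(G,H)-1$.
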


The density hypothesis concerns the union of nonzero entries across fibres,
not the matrix dimensions. A sparse workload can use a strict subset of $T-T$.
For a concrete PASTA corpus, full support depends on its generated coefficients;
the schedules below retain the entire stated support.

Proposition~\ref{prop:cyclic-square-transpose} applies the known formula for
cyclic quotients~\cite{BarcauPasolTurcas2026TDN}. Our contribution
here is its direct-matrix consequence: section search cannot beat the
obstruction, whereas exchanging the logical axes places word differences on a
kernel subgroup.

\begin{proposition}
\label{prop:cyclic-square-transpose}
Let $n\ge2$ be a power of two, let $R=C_{n^2}$, and let
$K=nR=\{0,n,2n,\ldots,(n-1)n\}\le R$, with arithmetic modulo $n^2$
(for $n=128$, $(R,K)$ is the pair $(G,H)$ of Section~\ref{sec:quotient-geometry}).
Then $K\simeq C_n$, $R/K\simeq C_n$, and
\[
  \delta(R,K)=2n-1.
\]
The interval $T_0=\{0,\ldots,n-1\}$ is an optimal section.  Put
$I_n=\{0,\ldots,n-1\}$.  On the logical representative grid
$I_n\times I_n$, the section-axis map into $R=C_{n^2}$
\[
 \iota_{\rm sec}(b,w)=nb+w
\]
therefore gives dense word support $T_0-T_0$ of size $2n-1$.  The
perfect-shuffle placement
\[
 \iota_{\rm ker}(b,w)=b+nw
\]
puts the word axis on $K$: at a fixed counter $b$, increasing the word
coordinate by one moves the slot by $n$. It gives dense word support $K$
of size $n$.
\end{proposition}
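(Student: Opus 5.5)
The proof splits into four steps: the group isomorphisms, a lower bound $|T-T|\ge 2n-1$ for every transversal, the matching upper bound from $T_0$, and the two support computations via Corollary~\ref{cor:dense}.

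\emph{Isomorphisms.} $K$ is generated by $n$, which has order $n$ in $C_{n^2}$, so $K\simeq C_n$. The quotient $R/K$ is cyclic, generated by the image of $1$, and has order $n^2/n=n$.

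\emph{Invariant.} I would cite the cyclic formula \cite[Theorem~3.5 and Corollary~3.6]{BarcauPasolTurcas2026TDN} for $\delta(R,K)=2n-1$. Should that citation need checking, a self-contained lower bound goes as follows. Let $T$ be a transversal, normalized so that $0\in T$, and lift each element to an integer representative. Any transversal contains exactly one element from each residue class modulo $n$, so $|T|=n$. It suffices to show $|T-T|\ge 2|T|-1$.
\begin{itemize}
\item A Cauchy--Davenport/Kneser-type argument gives $|T-T|\ge \min(n^2,\,2n-1)$, unless $T-T$ is stabilized by a nontrivial subgroup $P\le R$.
\item Since $n$ is a power of two, every nontrivial subgroup of $C_{n^2}$ contains the unique subgroup of order $2$, namely $\{0,n^2/2\}\subseteq K$.
\item Kneser gives $|T-T|\ge 2|T+P|-|P|$. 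Because $T$ meets each $K$-coset exactly once and $P\cap K$ is nontrivial, $T+P$ is large enough that this bound is again at least $2n-1$.
\end{itemize}
This step, the lower bound, is the main obstacle. I would rely on the cited theorem and keep the Kneser sketch only as a remark.

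\emph{Upper bound.} The interval $T_0=\{0,\dots,n-1\}$ maps bijectively onto $R/K\simeq\mathbb Z/n$, so it is a transversal. Its difference set is $\{-(n-1),\dots,n-1\}$ modulo $n^2$. These $2n-1$ residues are distinct because $2n-1<n^2$ when $n\ge2$. Hence $\delta(R,K)\le 2n-1$, and $T_0$ is optimal.

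\emph{Section-axis layout.} Under $\iota_{\rm sec}(b,w)=nb+w$, the counter coordinate is $h=nb\in K$ and the section is $s(w)=w$, whose image is $T_0$. Corollary~\ref{cor:dense} then gives, for dense word matrices, $S_A(s)=T_0-T_0$, of size $2n-1$.

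\emph{Kernel layout.} For $\iota_{\rm ker}(b,w)=b+nw$, the roles of the axes are exchanged. The counter coordinate $b$ now runs over the transversal $T_0$, and the word coordinate is $nw\in K$.

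Changing the word from $x$ to $y$ at a fixed counter $b$ moves the slot by $n(y-x)\bmod n^2$. I would apply Theorem~\ref{thm:direct-support} with the word-axis group $K$ in place of the quotient. Under the density hypothesis, the labels are therefore exactly
\[
\{n(y-x): x,y\in I_n\}=K,
\]
a set of size $n$. The reason this works is that $\{nw\}$ is itself a subgroup, so $K-K=K$. It remains to note that increasing $w$ by one moves the slot by $n$, and that the map is a bijection $I_n\times I_n\to R$ by uniqueness of the base-$n$ expansion of $b+nw$.
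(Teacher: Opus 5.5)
Your proposal is correct and follows the paper's proof: cite the cyclic formula of \cite{BarcauPasolTurcas2026TDN}, observe that $T_0$ attains $2n-1$, and read off the word differences $w-w'$ and $n(w-w')$ for the two placements. (Corollary~3.6 there gives $2q-c$; here $\ell=q=n$, so $c=1$.) The one imprecision is that Theorem~\ref{thm:direct-support} does not apply verbatim to $\iota_{\rm ker}$, because the counter axis $T_0$ is not a subgroup and $K$ has no complement in $C_{n^2}$; you should instead invoke its single-source argument, as the paper does, which needs only a bijective placement. Your optional Kneser remark also works, provided you use that the subgroups of $C_{n^2}$ form a chain: a nontrivial stabilizer $P$ then satisfies $P\subseteq K$, giving $|T+P|=n|P|$, or $K\subseteq P$, giving $T+P=R$.
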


\begin{proof}
The prior cyclic formula gives $\delta(C_{\ell q},H)=2q-c$, where $H$ is
the subgroup of order $\ell$ (index $q$) and $c$ is the largest divisor of
$q$ coprime to $\ell$
~\cite[Corollary~3.6]{BarcauPasolTurcas2026TDN}. Here $\ell=q=n$, so $c=1$.
The interval has exactly the $2n-1$ signed differences
$-(n-1),\ldots,n-1$, attaining the bound. For the two placements, take
differences in the word coordinate: $w-w'$ for $\iota_{\rm sec}$ and
$n(w-w')$ for $\iota_{\rm ker}$. The latter range over the order-$n$
subgroup $K$. The single-source argument in the proof of
Theorem~\ref{thm:direct-support} also applies to $\iota_{\rm ker}$: for a
nonzero entry from word $x$ to word $y$ of counter $b$, only
$\tau_{n(y-x)}$ can read source $b+nx$ at destination $b+ny$. Under the same
density hypothesis across counters, every label in $K$ is therefore necessary
in direct form.
\end{proof}

There is no contradiction with the definition of $\delta(R,K)$.  The first
placement assigns the word dimension to quotient representatives and the
batch dimension to $K$; the transpose exchanges these roles.  It is not a
comparison of two sections for the same logical word axis.  Rather,
$\delta(R,K)$ certifies that changing the section alone cannot beat 255, which
motivates the kernel--section axis transpose.

The direct-form restriction is essential: a circuit may compose masked
translations, share intermediate results, and factor their sum. The next
section does exactly that. The theorem fixes which labels occur when such a
circuit is expanded into one aggregated sum; it does not lower-bound its
executed rotations, products, noise, key material, or time.

\section{Factoring Movements with BSGS}
\label{sec:bsgs}

Let $S$ be a translation support and suppose $S\subseteq B+J$.  Choose one
decomposition $g=b_g+j_g$ for every $g\in S$.  If
$D'_g=\tau_{-j_g}D_g\tau_{j_g}$, then
\[
  D_g\tau_g=\tau_{j_g}D'_g\tau_{b_g}.
\]
This identity justifies evaluating the baby rotations $\tau_bv$ once,
combining them with shifted masks, and applying giant rotations $\tau_j$ to
the partial sums.

For a cover $S\subseteq B+J$ containing zero in both $B$ and $J$, the
program uses $|B|-1$ nonidentity baby rotations and $|J|-1$ nonidentity
giant rotations. The same label may appear in both roles, so this sum counts
requests, not distinct evaluation keys. The elementary inequality
$|B+J|\le |B||J|$ gives a useful lower bound when all roles have unit cost.
For the full cyclic support $C_n$, it implies
\[
 |B|+|J|-2\ge \lceil2\sqrt n\rceil-2.
\]
At $n=128$, the bound is 21. The sets
\[
 B=\{0,\ldots,9\},\qquad J=\{0,10,\ldots,120\}
\]
attain it. Assign $d\in\{0,\ldots,127\}$ to
$b_d=d\bmod10$ and $j_d=10\lfloor d/10\rfloor$. This uses 128 of the
130 cover cells and gives $9+12=21$ nonidentity roles. This is an exact optimum
for the uniform role-count objective only; it need not minimize a cost that
weights baby operations, giant operations or physical key switches differently.

For Proposition~\ref{prop:cyclic-square-transpose} at $n=128$, the
contiguous-block word support is the signed interval $[-127,127]$ in
$C_{16384}$.  It has the explicit cover
\[
 B=\{-7,\ldots,7\},\qquad J=\{-120,-105,\ldots,120\}.
\]
For $d\in[-127,127]$, set
$q=\lfloor(d+7)/15\rfloor$, $j_d=15q$, and $b_d=d-j_d$; then
$q\in[-8,8]$, $b_d\in[-7,7]$, and $d=b_d+j_d$.  Hence this complete signed
$15\times17$ cover has cost $14+16=30$, which matches the product lower bound
for 255 labels.  After the perfect shuffle the word
support is $C_{128}$ and the displayed cover uses 21 roles.  The difference is exact for the
uniform role-count objective.  It does not by itself predict the time ratio:
the layouts use different masks and automorphism values, and the backend may
assign different paths and costs to their roles.

\section{Three Equally Batched PASTA-3 Implementations}
\label{sec:implementation}
\label{sec:primary-transpose}

The BGV context has $m=65536$, $p=65537$, and $r=1$, giving two cyclic
rows of length $L=16384$. For branch $\epsilon\in\{0,1\}$, row
position $u$ has physical slot $2u+\epsilon$. Row rotations act on both
branches simultaneously; Mix uses a separate row swap. Section~\ref{sec:setup}
specifies the remaining parameters and library configuration.

Both contiguous candidates place counter $b$, word $w$ at $128b+w$.
The subgroup candidate places it at $b+128w$. Each receives a fresh HE
encryption of the same key repeated in its counters' state slots. A
permutation between these ciphertexts is not free, so the comparison starts
from native inputs and specifies native outputs.

\subsection{The affine schedules}

The direct contiguous schedule uses signed word differences
$d\in\{-127,\ldots,127\}$. Writing $A_{\epsilon,b}$ for the current
layer's public matrix on branch $\epsilon$ of counter $b$, its mask at
destination $128b+w$ is
\[
 E_d[\epsilon,128b+w]=
 \begin{cases}
 A_{\epsilon,b}[w,w-d],&0\le w-d<128,\\
 0,&\text{otherwise}.
 \end{cases}
\]
The zeroes suppress crossings into adjacent counters. A signed $15\times17$
BSGS cover uses babies $-7,\ldots,7$ and giants $-120,-105,\ldots,120$.
Each diagonal appears exactly once, with its coefficient mask transported by
the negative giant rotation. This executes 30 nonidentity row rotations and
255 coefficient products for the two branches together.

For contiguous composed evaluation, let $P_s$ rotate each 128-word block
locally. For $0<s<128$, the four-word example generalizes to
\[
 P_s x=\tau_s x+M_s\odot(\tau_{s-128}x-\tau_s x),\qquad
 M_s[\epsilon,128b+w]={\bf1}_{w<s}.
\]
The mask selects the source in the same block, including at the first and last
blocks of the row. Reduce any signed shift modulo 128; a zero shift is the
identity. The resulting permutations satisfy $P_sP_t=P_{s+t\bmod128}$, so
they support the same BSGS factorization as ordinary cyclic rotations.

Use babies $t=0,\ldots,9$ and giants $j=0,10,\ldots,120$, retaining only
$j+t<128$. At position $128b+w$ the prepared coefficient for cell $(j,t)$ is
\[
 \widehat D_{j,t}[\epsilon,128b+w]
 =A_{\epsilon,b}[(w+j)\bmod128,(w-t)\bmod128].
\]
Accumulate $\widehat D_{j,t}\odot P_t x$ inside each giant group, then apply
$P_j$. There are 21 nonidentity local rotations, each requiring two row
rotations and one wrap-mask product in this program.

For subgroup evaluation, replace $P_s$ by $Q_s=\tau_{128s}$. At row coordinate
$b+128w$, this reads word $w-s\bmod128$ of counter $b$. The prepared
coefficient formula is identical after changing the placement. No wrap repair
is needed. The three schedules therefore have the counts in
Table~\ref{tab:arms}. Constants are added once after each matrix result, and
all candidates copy the first term into an accumulator instead of adding it to
an artificial zero.

\begin{table}[t]
\centering
\caption{One affine layer on both branches and 128 counters. Width is the
number of baby steps $|B|$. Counts exclude
branch Mix and refer to program operations, not physical key switches.}
\label{tab:arms}
\small
\begin{tabularx}{\linewidth}{@{}Yrrr@{}}
\toprule
Schedule & Row rotations & Coefficient products & Wrap products\\
\midrule
\tablerows{tables/table1-affine-count-rows.tex}
\bottomrule
\end{tabularx}
\end{table}

The composed schedule illustrates the scope of the direct-support theorem.
When expanded into a single direct sum, it still has the required direct
support. Its factored program shares intermediate work and has fewer plaintext
products. Its wrap-mask multiplications and additional rotations can also
change preparation time and noise consumption. These costs must be inspected
and measured, rather than inferred from the 255-versus-128 support sizes.

\subsection{Nonlinear layers and returned slots}
\label{sec:nonlinear}

Each affine result enters Mix, implemented by a branch swap and two
ciphertext additions. For a predecessor-square step, let $M$ be one at
words $1,\ldots,127$ in both branches and zero at word zero. The row
displacement is $s=1$ for either contiguous schedule and $s=128$ for subgroup
placement. We execute
\begin{equation}
 x+M\odot\tau_s(x^2).
 \label{eq:nonlinear-order}
\end{equation}
Squaring a copy of $x$, rotating it, applying $M$, and adding it to $x$
uses one square, one row rotation, one plaintext product, and one addition.
The mask suppresses the unwanted predecessor at word zero. For all other
words, the rotation reads the correct predecessor without a repaired local
cyclic rotation. Since $M^2=M$ and translation preserves slotwise products,
this equals $x+(M\odot\tau_s x)^2$. Thus square--rotate--mask and
rotate--mask--square implement the same PASTA transformation. All complete
conversions reported below use square--rotate--mask. Their equivalence does
not imply equal noise: modulus selection, rounding, and relinearization
intervene between the operations.

The componentwise cube uses two ciphertext products, each followed by
relinearization. After the final Mix, one plaintext mask keeps the left
branch and clears the right; negation and addition of the encoded public
ciphertext then return the messages. This enforces the output
specification in Section~\ref{sec:background}. Sanitizing unused slots is part of
evaluation because they would otherwise expose additional state to a
recipient who decrypts the returned ciphertext.

\subsection{Consuming a converted batch with a later query}
\label{sec:consumer}

The server retains the sanitized ciphertext and accepts a public query
$\alpha\in\{0,1\}^{128}$ after conversion. A query uses a copy of that
ciphertext, so two different queries need neither another PASTA evaluation
nor query-specific PASTA matrices. Its intended result is
\[
 y_b=\sum_{w=0}^{127}\alpha_wv_{b,w}.
\]
For the measurement example only, $0\le v_{b,w}\le255$; hence
$0\le y_b\le32640<65537$. General cipher tests still use field-valued
messages. The application restriction gives the sum an integer interpretation;
it is not a change to PASTA or its plaintext field.

Let $d=1$ for contiguous placement and $d=128$ for subgroup placement.
Multiply the input by a public mask containing $\alpha_w$ at message positions
and zero in the other branch, obtaining $S_0$. Compute
\[
 S_{j+1}=S_j+\tau_{-d2^j}S_j\quad(0\le j<7).
\]
Induction gives $S_j=\sum_{t=0}^{2^j-1}\tau_{-dt}S_0$.
At row position $128b$ in the contiguous layout, $S_7$ therefore reads exactly
positions $128b,\ldots,128b+127$. None crosses a record boundary, including
for the last record. At position $b$ in subgroup placement it reads exactly
$b+128w$ for $0\le w<128$. Other contiguous destinations can contain sums
crossing record boundaries; we do not return them. A final mask keeps only
branch zero at the designated positions, enforcing
\begin{equation}
 \mathrm{queryout}_{2\iota(b,0)}=y_b,
 \qquad \mathrm{queryout}_{k}=0
 \quad\text{for every other slot }k.
 \label{eq:query-output}
\end{equation}
Thus both layouts need seven ordinary row rotations, seven additions, and two
plaintext products. Repairing a cyclic block rotation at every tree level
would do unnecessary work. The two contiguous cipher schedules share this
query algorithm; neither layout requires a transpose.

The query algorithm receives the retained ciphertext and public weights,
without clear messages or the secret key. Each query copies the retained
conversion: its noise growth affects its own answer, not the stored batch or
another independent query. Two tested queries therefore do not impose a
maximum of two queries on the batch. A same-context control applies the
algorithm to freshly encrypted records in each native layout, isolating query
arithmetic from noise accumulated during conversion. Plaintext differential
and boundary tests check every slot of the reduction and selected output.
Section~\ref{sec:query-results} compares fresh and converted inputs at two
modulus choices, including both successful queries and capacity failures.

\section{Evaluation: Conversion and Later Queries}
\label{sec:newexperiment}

\subsection{Experimental setup and measurement boundaries}
\label{sec:setup}

The paired study uses
$(m,p,r,\mathrm{bits},c)=(65536,65537,1,460,2)$, with
128 records of 128 words per batch and no bootstrapping. We use
HElib~\cite{HElibSourceV230}, NTL and GMP at the versions recorded in the
repository~\cite{PASTAPackingCode2026}. The executable is built for arm64
with AppleClang 21.0.0, C++17.
The host is an Apple M5 with ten CPU cores and 24 GiB RAM, running
macOS 26.6.2. NTL uses one thread. Candidates run sequentially,
without concurrent benchmark, build or document-generation jobs; ordinary
desktop and operating-system activity is uncontrolled.

Calibration compares direct widths 15, 16 and 17 with subgroup widths 10,
11 and 16 on two corpora under separate fresh HE keys. All six covers pass
complete conversion and both queries on both corpora. The fixed selection
rule minimizes median recurring server cost within each layout, including
fresh public generation, conversion and two queries; ties favor the smaller
width. It selects direct width 17 and subgroup width 11. Direct width 17
uses babies $-8,\ldots,8$ and giants $17k$, $-7\le k\le7$;
subgroup width 11 uses babies $0,\ldots,10$ and giants $11k$, $0\le k\le11$,
omitting cells above 127. Their counts equal the corresponding reference
counts in Table~\ref{tab:arms}. Direct width 16 uses 31 rotations.
The winning direct median is only 16 ms (0.043\%) below width 15;
selection does not establish a meaningful advantage for that particular width.

The selected programs receive twelve paired corpora across six new HE keys,
two corpora per key, with alternating layout order. Calibration and measurement
fixtures come from disjoint deterministic domains. Each pair shares its
logical PASTA key, byte-valued records, nonce/counter material and queries;
these vary across pairs. Counters are $0,\ldots,127$. Records zero and 127
contain all zeroes and all 255s. Each batch receives two independently evaluated
queries on copies of its retained conversion. The declared query profiles
combine random Boolean subsets with the full subset and endpoint singletons.
The repository includes the fixture generator, which uses seed 20260910;
its reproducible keys are test inputs, not a production key-generation method.

One context is reused within each phase. Both layouts use the same HE key
and the union of their required automorphism keys plus relinearization material;
each gets a separate encryption of the repeated PASTA key in its native layout.
Each selected schedule owns its reusable geometry encodings: the plaintext
encodings of the predecessor mask, output-branch mask and query selector,
which depend on the layout and context rather than nonce/counter material.
New nonce/counter material always incurs public expansion and coefficient preparation, with only
one dense affine layer's encodings live at once. Public generation is measured
once per corpus and charged equally to each alternative.

Outer monotonic timers cover complete conversion and complete query calls,
including coefficient/weight preparation, required ciphertext copies,
arithmetic and output zeroing. The recurring server interval is public
generation plus conversion plus two queries. Inner preparation and arithmetic
timers explain this cost. Tables~\ref{tab:conversion}
and~\ref{tab:query-cost} also provide conversion-only and per-query costs.
Context/key/geometry setup, input encryption,
serialization, decryption, all-slot checking, capacity sampling and file writes
are outside this interval. HElib's plaintext representation expansion and warning
formatting remain inside the arithmetic intervals; warning-file writes are outside.
We summarize medians, inclusive linearly interpolated quartiles and ranges.
These are descriptive observations, with six independent HE keys; neither
queries nor slots are independent key trials. No confidence interval or
failure-probability estimate is asserted. No measured pair was replaced.

HElib's remaining noise capacity is the logarithmic margin between the active
modulus and its modeled total-noise bound, reported as integer bits:
$\log_2 Q_{\mathrm{act}}-\log_2\mathcal{N}$. It is neither a security level
nor a measured plaintext error. Modulus changes during multiplication and key
switching mean that a noise log alone does not determine capacity loss.
Separate single-fixture diagnostics at 400 and 460 bits show the composed
schedule's refusals and the effect of insufficient query headroom
(capacity left for queries after conversion).
Their fixed order and shared caches differ from this protocol; their timings
are not pooled with the paired study.

\subsection{Correct conversion and equally batched cost}
\label{sec:conversion-results}

Plaintext tests check the published PASTA known-answer vectors
\cite{PASTAArtifact,PASTAFramework}, independently indexed affine and
full-cipher schedules, physical boundaries, inactive slots and both predecessor
orders. Consumer tests include empty/full subsets, endpoint words and
byte-boundary records. The field codec also checks every value in $\F_{65537}$.
In the paired encrypted study, all 24 conversions recover all 16,384 message
words and all 16,384 required zeroes. All 48 later queries recover the expected sums and required zeroes, and serialization checks confirm that their retained inputs remain
unchanged. Calibration separately completes twelve conversions, 24 converted
queries and eight same-context fresh-BGV control queries.

\begin{table}[t]
\centering
\caption{Paired recurring server cost in seconds: median [first, third
quartile] over twelve corpora per layout. Complete conversion includes its
preparation and arithmetic; the final row includes public generation,
conversion and two queries. Medians of components need not add.}
\label{tab:conversion}
\small
\begin{tabularx}{\linewidth}{@{}Yrr@{}}
\toprule
Interval & Contiguous direct & Subgroup aligned\\
\midrule
\tablerows{tables/table2-conversion-rows.tex}
\bottomrule
\end{tabularx}
\end{table}

The median recurring server costs are 36.67 seconds direct and 22.91 seconds
subgroup (Table~\ref{tab:conversion}). The median of the twelve paired
direct/subgroup ratios is 1.602, with quartiles 1.597--1.603 and range
1.573--1.632. This is a measured advantage for the selected, equally batched
programs under the declared cost boundary. Preparation accounts for most of
the conversion cost. New counter-dependent matrices cannot be treated as free
constants, and the support count alone does not predict this ratio.

Setup is separate: measured-phase context construction takes 33.49 seconds,
and geometry encoding takes 2.73 seconds direct and 2.35 seconds subgroup.
Across six keys, median key generation and evaluation-key generation take
35 ms and 1.33 seconds. The common union contains 62 automorphism matrices
plus relinearization material and serializes to approximately 401.1 MB.
Peak resident set size (RSS) is 9.93 GiB for the measured phase.
We do not claim that the shared-resource observations
establish standalone key or memory savings for either layout.

\subsection{Capacity and the ability to answer later queries}
\label{sec:query-results}
\label{sec:noise}

Each answer has 128 permitted sums, at physical slots $256b$ for contiguous
placement or $2b$ for subgroup placement, and zeroes in every other slot.
Direct query capacities range from 33 to 39 bits; subgroup capacities range
from 22 to 32 (Table~\ref{tab:query-diagnostic}). The subgroup layout thus retains less
headroom despite its lower recurring cost. Some answers fall below the
30-bit engineering reserve considered during parameter selection; this reserve
is not a correctness threshold or security level. Success on these inputs
does not bound the failure probability for other keys or public material.

\begin{table}[t]
\centering
\caption{Capacity in bits and outcomes after checking every slot. Paired rows give ranges over
12 conversions and 24 queries per layout. Prior rows are separate
single-fixture observations; their two queries have the same displayed
capacity and outcome. A refusal yields no recovered plaintext.}
\label{tab:query-diagnostic}
\small
\begin{tabularx}{\linewidth}{@{}lYll@{}}
\toprule
Bits, study & Input path & Conversion & Later queries\\
\midrule
\tablerows{tables/table3-capacity-outcome-rows.tex}
\bottomrule
\end{tabularx}
\end{table}

The composed width-10 program executes its arithmetic graph but is refused
by HElib's noise guard in both single-fixture contexts
(400 and 460 bits, Table~\ref{tab:query-diagnostic}). This is an adverse result for
that program and those parameters, not an impossibility result for composed
schedules or contiguous packing. At 400 bits, direct and subgroup conversion
pass but their later queries are refused. Fresh same-context controls pass,
with capacities 320 and 314, respectively, locating the difficulty in the
margin left after conversion.

Public masks help explain these noise costs. A Boolean slot mask need not have
polynomial encoding norm one. For wrap mask $M_s$ from
Section~\ref{sec:implementation}, let $m_s(X)$ be its balanced polynomial
encoding. HElib's embedding routine computes
$\max_\zeta|m_s(\zeta)|$ over primitive 65,536th roots of unity.
The public sizes are approximately $2^{19.98}$ and $2^{19.75}$ for wrap
steps 1 and 10. Such masks recur before nonlinear stages amplify accumulated
noise. Query output selectors have sizes approximately $2^{19.32}$ contiguous
and $2^{23.59}$ subgroup. Avoiding wrap repair does not guarantee a cheaper
query selector; these observations do not allocate every noise cost.

\subsection{Query cost and returned ciphertexts}
\label{sec:query-cost}

The sum of two complete query calls has median cost 0.684 seconds direct
and 0.708 seconds subgroup. Thus the recurring-cost advantage is not evidence
of faster later queries. Table~\ref{tab:query-cost} gives the individual-query
intervals, whose variation includes different public masks. Decryption and
all-slot decoding take about 7.2 seconds per answer on this host and are
outside the server interval. Network transfer and service scheduling are not
measured; server arithmetic alone is not the cost of obtaining a decoded answer.

\begin{table}[t]
\centering
\caption{Individual query costs in milliseconds: median [first, third
quartile] over 24 queries per layout. Serialization and decryption/decoding
are separate from the complete server query. Oracle comparison is excluded.}
\label{tab:query-cost}
\small
\begin{tabularx}{\linewidth}{@{}Yrr@{}}
\toprule
Interval & Contiguous direct & Subgroup aligned\\
\midrule
\tablerows{tables/table4-query-rows.tex}
\bottomrule
\end{tabularx}
\end{table}

Ten direct conversions produce ciphertexts of 4,194,800 serialized bytes;
the other two occupy 3,670,472 bytes. Their respective query answers have
the same sizes.
All subgroup conversion ciphertexts and answers occupy 3,670,472 bytes.
Active modulus sets can change during evaluation, so serialized sizes need
not be constant across keys. These are actual serialized objects, without
compression. Selecting 128 answer slots still returns a full-ring ciphertext.

\subsection{Client work, reuse, and direct ingestion}
\label{sec:client}

The shallow direct-BGV comparator uses the smaller context
\[
 (m,p,r,\mathrm{bits},c)=(32768,65537,1,160,2),
\]
with all 16,384 slots holding useful words.
Physical slot $2(128(b\bmod64)+w)+\lfloor b/64\rfloor$ stores record $b$,
word $w$; each row holds 64 contiguous records. Weighting, seven rotate-and-add
levels and a selector target record $b$'s sum at
$256(b\bmod64)+\lfloor b/64\rfloor$, with every other slot zero.
This matches the logical records and permitted answers without an extra
cipher-state branch. Its mapping and reduction pass plaintext tests.

The encrypted baseline stops during its first query with a library exception
in ciphertext prime-set handling, before producing an answer. The failing internal operation has not been localized. This is an unresolved
implementation failure, distinct from the composed schedule's noise refusal.
It leaves baseline query cost, answer size, correctness and a complete
direct-ingestion comparison unresolved. The attempt does record one input:
encoding/encryption takes 8.00 ms, serialization 0.65 ms, and its upload
occupies $U=786,728$ bytes. These are partial observations from one fixture.

Across the twelve measured corpora, native PASTA encryption takes a median
103.97 ms, including nonce/counter expansion and field addition; the outer
producer interval including wire encoding is 104.02 ms. These timings do not
support a producer-compute advantage over the single shallow-BGV observation.
The implemented upload stores each field element in 17 bits, including 65536,
with a 16-byte field header and a 32-byte envelope for the shared nonce,
counters and dimensions. Without compression, each batch occupies
$R=16+32+17(128\cdot128)/8=34,864$ bytes. All 65,537 field values round-trip;
truncated and noncanonical encodings are rejected.

The HE-encrypted PASTA key occupies $K=4,194,800$ bytes in both layouts;
median encoding plus encryption takes 43.16 ms direct and 44.84 ms subgroup.
For $E$ batches per symmetric-key epoch, record uplink averages $R+K/E$,
versus $U$ for direct ingestion. It is smaller when $U>R$ and
$E>K/(U-R)$. The observed sizes give a threshold of six batches and costs
4,229,664, 454,344 and 76,812 bytes per batch for $E=1,10,100$.
These are illustrative byte-accounting scenarios using the partial baseline
input, not evidence that its query path works. Distinct nonce/counter pairs
are required under each PASTA key~\cite[Section~7]{PASTA}. The scenarios use
at most 12,800 block identifiers; identifier availability is not a new
cryptographic usage bound. The original cipher assumption remains necessary.

HE public/evaluation keys have a separate reuse interval. If their distribution
cost is $D$ over $E'$ batches under one HE key, account for $D/E'$ in its actual
owner-to-producer or owner-to-server direction, separately for each method;
$E'$ need not equal $E$. The paired key union occupies about 401.1 MB; the
partial shallow baseline's public/evaluation bundle occupies 10.23 MB.
These differ in context and functionality. Producer compute would favor PASTA
only if $C_{\rm PASTA}+C_{\rm provision}/E<C_{\rm direct\ HE}$; the available
timings point the other way. The server additionally pays conversion, and
answers remain full-ring ciphertexts. Reduced uplink does not establish
lower total latency. Repeated queries on one retained batch do not amortize
key provisioning over additional record batches.

\subsection{Parameters and security}
\label{sec:security}

We use the lattice estimator~\cite{LatticeEstimator2026} with an ordinary
learning-with-errors (LWE) proxy: dimension $N$, the actual modulus $QP$
including special key-switch primes, unlimited independent samples,
unconditioned ternary secret probabilities $(1/4,1/2,1/4)$ on $(-1,0,1)$,
and discrete-Gaussian error scale 3.2. HElib forms a zero-message key relation
$b=-as+pe$ modulo the ciphertext modulus, where $a$ is uniform, $s$ is the
secret and $e$ is sampled at scale 3.2~\cite{HElibSourceV230}.
Since $p=65537$ is invertible modulo that modulus, multiplication by $p^{-1}$
gives $p^{-1}b=-(p^{-1}a)s+e$: the scaled $a$ stays uniform and the remaining
error is $e$. This motivates the proxy's error scale. The proxy does not exactly
model ring correlations, HElib's polynomial conditioning or secret-dependent
evaluation-key messages. The original PASTA and evaluation-key assumptions
remain necessary~\cite{PASTA,BossuatEtAl2025HEGuidelines}.

For conversion, $N=32768$, $\log_2 Q=463.338269$,
$\log_2 P=207.449098$ and $\log_2(QP)=670.787367$.
Completed classical searches include primal unique shortest vector problem
(uSVP), bounded distance decoding (BDD), dual, dual-hybrid and primal BDD-hybrid
attacks. Here MATZOV refers to the cited report~\cite{MATZOV2022}, and
ADPS16 to Alkim et al.'s classical
core shortest-vector-problem (core-SVP) reduction-cost model
\cite{AlkimEtAl2016NewHope}, with cost $2^{0.292\beta}$ at block size $\beta$;
the latter names a cost model, not an attack family. Minimum estimated log-work
is 168.60 under MATZOV, attained by BDD, and 137.53 under ADPS16, attained by
dual-hybrid.

For shallow direct ingestion, $N=16384$ and $\log_2(QP)=243.457043$.
The completed uSVP/BDD/dual/dual-hybrid minima are 240.28 (MATZOV, BDD) and
212.87 (ADPS16, dual-hybrid). An additional ADPS16 primal BDD-hybrid estimate
is 214.05. These results do not validate the failed query implementation.
Software versions, commands, modulus components and sensitivity results at error
scale 2.8 are in the repository~\cite{PASTAPackingCode2026}. This perturbation is
not a proved conservative bound.

The full attack suite was not completed because of computational cost:
additional meet-in-the-middle hybrid and coded-BKW estimates remain incomplete,
and Arora--Gr\"obner was not evaluated. These are conditional estimates for
the stated proxy, not lower bounds on attack cost or a certificate of 128-bit
protocol security.

\section{Related Work}
\label{sec:related}

Homomorphic SIMD packing and HElib's diagonal, BSGS and hoisting methods
provide the basic arithmetic and movement tools used here
~\cite{SmartVercauteren2014SIMD,HaleviShoupAlgorithmsInHElib,HaleviShoup2018LinearTransforms}.
PASTA uses diagonal/BSGS layers and packs its two branches in parallel rows.
Its word-sliced alternative avoids rotations but changes ciphertext organization
and the work needed for packed output
\cite[Sections~6.5.2--6.5.3 and Appendix~C]{PASTA}.
Our equally batched layouts retain the packed cipher and use the authors' public
material and reference implementation~\cite{PASTAArtifact,PASTAFramework}.

Layout-aware HE compilation likewise connects placement to subsequent computations.
CHET records physical strides, logical dimensions and padding in ciphertext
tensors, chooses layouts by estimated cost, and delays costly reshaping through
metadata updates~\cite[Sections~4.2 and 5.3]{CHET}. Porcupine synthesizes
vectorized kernels from instruction semantics and a latency/noise objective;
its dot-product example uses rotate-and-add reduction
~\cite[Sections~3.2, 4.2 and 5.2]{Porcupine}. Both motivate comparing executable
schedules and subsequent computations instead of treating support count as a latency model.
We specialize established packing and scheduling tools to complete PASTA
conversion and subsequent queries.

Peikert and Pepin's top-down Chinese remainder theorem (CRT) construction uses a transversal to constrain
possible automorphism coefficients by its difference set
~\cite{PeikertPepinViveGalois1}. Barcau, Pa\c{s}ol, and \c{T}urca\c{s}
define the invariant and prove the cyclic formula
\cite{BarcauPasolTurcas2026TDN}. Theorem~\ref{thm:direct-support} connects
counter-dependent matrices to the direct form $\sum_gD_g\tau_g$.
The complete-cipher schedules
and subsequent queries then expose costs and noise
failures that direct support alone does not predict.

\section{Discussion and Conclusion}
\label{sec:limitations}
\label{sec:conclusion}

Theorem~\ref{thm:direct-support} connects counter-dependent matrix evaluation
to direct translation support. At full batch size, dense word maps require
255 direct labels in contiguous placement and 128 in subgroup placement.
Composed rotations share
work despite the larger expanded support, but add masks whose costs propagate
through the nonlinear cipher. The choice of packing therefore depends
on mask preparation and subsequent computation as well as on rotation counts.

In the measured study, all 24 conversions and 48 subsequent queries recover
the specified values and required zeroes. Across twelve paired corpora under
six HE keys, the median direct/subgroup recurring server-cost ratio is 1.60,
including fresh public generation, conversion and two queries. The subgroup layout
reduces this conversion workload's cost while leaving less query headroom;
the evidence does not establish faster subsequent queries.

The tested composed schedule is refused by the noise guard in both single-fixture
contexts (400 and 460 bits), and the smaller 400-bit context supports neither layout's converted
queries. These observations do not rule out other composed schedules or bound
a decryption-failure probability. The shallow direct-ingestion baseline's
first-query failure leaves the end-to-end application comparison open, and
partial producer measurements support no PASTA compute advantage. The security
estimates remain conditional on the stated proxy and incomplete attack coverage.
Within these limits, these results establish a packing-dependent cost--noise
tradeoff.

\section{Code and Data Availability}
\label{sec:artifact}

Code, dependencies and recorded results are
available in the GitHub repository ~\cite{PASTAPackingCode2026} together with instructions for installation, testing and reproduction of the results.

\section*{Use of Artificial Intelligence Tools}

OpenAI's Codex (multiple models) was used to assist with the implementations,  manuscript drafting and revision.

\bibliographystyle{alpha}
\bibliography{references}
\end{document}